\newtheorem{thm}{Theorem}
\newtheorem{prop}[thm]{Proposition}
\begin{document}
\title{Protected Income and Inequality Aversion\thanks{The authors warmly thank audiences at workshops and seminars at CERE-Umeå, ISER-Osaka, MDOD-Sorbonne, 
and Cal Poly San Luis Obispo for very helpful discussions. The authors also expresses gratitude to the Princeton School of Public and International Affairs for hosting Zambrano's 
sabbatical and to the Paris School of Economics for hosting Zambrano during the final stages of this research.
This research has been supported by the National Science Foundation
(USA) under Grant \# 2313969.}}
\author{Marc Fleurbaey\thanks{Paris School of Economics, Centre National de la Recherche
Scientifique, and École normale supérieure - PSL, France.} and Eduardo Zambrano\thanks{Corresponding Author. Department of Economics. Cal Poly - San Luis Obispo. USA.}}
\maketitle
\begin{abstract}
We discover a fundamental and previously unrecognized structure within the class of additively separable social welfare functions that makes it straightforward to fully characterize and elicit the social preferences of an inequality-averse evaluator. From this structure emerges a revealing question: if a large increment can be given to one individual in a society, what is the maximal sacrifice that another individual can be asked to bear for its sake? We show that the answer uncovers the evaluator’s degree of inequality aversion. In particular, all translation-invariant evaluators would sacrifice the full income of the sacrificed individual if their income were low enough and a constant amount of their income otherwise. Scale-invariant evaluators would sacrifice the full income of the sacrificed individual at all income levels if their inequality aversion was no greater than one, and a constant fraction of their income otherwise. Motivated by these findings, we propose a class of social preferences that, starting from a minimum-income level of protection, ensure a higher fraction of the sacrificed individual's income is protected the lower their income.
\end{abstract}

\textbf{Keywords}: Social Welfare Functions; Inequality Aversion; Revealed Ethics.

\textbf{JEL Classification}: D31, D63, D90.

\section{Introduction}

Since Atkinson's (1970) and Kolm's (1976) seminal works on inequality
measurement, the notion of inequality aversion has been well understood,
and applied to many different contexts. The concept of inequality
aversion has been closely associated to the Pigou-Dalton transfer
principle, according to which social welfare does not decrease (or
inequality does not increase) when a given amount is taken from someone
and given to someone else who is less well-off, without reversing
the ranking between these two individuals.\footnote{Aboudi and Thon (2006) also study what they call the Muirhead-Dalton
principle, in which transfers are allowed to reverse the ranking between
the two individuals, but still leave the difference between them smaller
than in the initial distribution. For authoritative reviews of this
field, see Chakravarty (2009), Dutta (2002) and Cowell (2000).}

The empirical-experimental literature on people's social preferences\footnote{For a review, see Gaertner and Schokkaert (2012). }
has elicited the value of their inequality aversion about income by
making respondents choose between distributions involving different
levels of inequality and different levels of average income. For instance,
Venmans and Groom (2021) find an average coefficient of inequality
aversion of 2.9 in an empirical survey on a sample of students, while
Hurley et al. (2020) find an average coefficient of 3.27 in a representative
sample of the Ontario province of Canada. The underlying mechanism
for such ethical preference elicitation is the leaky-bucket experiment:
a distribution is less unequal than another, but has a lower average
income, reflecting the classical equality-efficiency trade-off highlighted
in Okun (1975). The meaning of the coefficient of inequality aversion
is usually presented in this way: a coefficient of 3 means that it
is worth giving a dollar to someone even if this requires taking 8
($=2^{3}$) dollars from a person who is twice as rich\textemdash thus
losing 7 dollars, i.e., 87.5\% of the amount levied from the donor.

A small literature (Fleurbaey and Michel 2001, Aboudi and Thon 2003,
Thon and Wallace 2004, Dubois 2016) has explored the relationship
between principles of transfer that allow for leakage and the value
of the coefficient of inequality aversion in a social welfare criterion
that endorses such leaky transfers.\footnote{See also Bosmans (2014) for a similar analysis applied to rank-dependent
poverty measures, involving transfers across the poverty threshold.} For instance, Fleurbaey and Michel (2001) show that if the transfers
are proportional to the initial level of the donor and recipient,
the coefficient must be at least 2, whereas if the transfers are proportional
to the final levels, the coefficient must be at least 1. 

A coefficient of 2, according to the common presentation (referring
to a gift of 1 dollar costing $2^{2}=4$) means that wasting 75\%
of the transfer when the recipient has only half the level of the
donor is acceptable, and this seems a lot. The principle of proportional
transfers (on initial levels), however, suggests something less extreme,
since when the recipient gets X, a donor who is \emph{initially} twice
as rich would only lose 2X, which means that only half of the gift
is lost in the transfer. This apparent contradiction is resolved by
noting that the reasoning in terms of one dollar gifts is only true
for infinitesimal gifts, whereas the principe of proportional transfers
applies to transfers of any size (under the proviso that they do not
reverse the order between the donor and the recipient). 

This shows that the way in which the implications of the coefficient
of inequality aversion are presented may influence the ethical choice
of the coefficient. One may find that the principle of proportional
transfers appears reasonable, since it is very common for people to
think in terms of variations in percentage, while hesitating to condone
the maxim that ``a marginal transfer from a donor to someone who
is half below can waste 75\% and still improve social welfare''\textemdash and
yet the two refer to the same coefficient of inequality aversion of
2.

In this paper we discover a fundamental and previously unrecognized structure within the class
of additively separable social welfare functions that makes it straightforward to fully characterize the social preferences of an inequality-averse evaluator.
This characterization leads directly to the following elicitation procedure: Start from a perfectly equal
distribution, and then imagine that an increment can be given to one
individual. What is the maximal sacrifice that can be imposed on another
individual for the sake of this increment? In other words, if one
individual's level of income can become infinitely high, how much
could justifiably be taken from another one in order to make this
possible? It seems to us
that this framing pushes people's intuition further toward high values
of inequality aversion. In particular, a coefficient that is below
1 would justify depriving the sacrificed individual from everything.
For a Kolm-Atkinson (scale-invariant) social welfare function, a coefficient of 2 or
more means that it is never acceptable to bring the sacrificed individual
below half of her initial level, which seems a rather generous (i.e.,
generously \emph{low}) bound\textemdash the bound is 70\% of the initial
level when the coefficient is 3. In addition to relating levels of
protection of the sacrificed individual's income to coefficients of
inequality, we actually characterize the classes of additively separable
social welfare functions that guarantee specific (absolute or relative)
levels of protection.

The paper is structured as follows. In section 2, we define the framework
and the main notions. Section 3 determines the general conditions
under which an additively separable social welfare function provides
a strictly positive protected level to the sacrificed individual.
In section 4, we characterize the class of social welfare functions
which protect a fixed fraction of the initial level, against increments
given to either one other individual, or two other individuals. Section
5 instead examines the case in which the protected level is always
at a fixed absolute difference from the initial level. Section 6 explores
how to obtain a protection that is, in proportion to the initial level,
greater when the initial level is lower, reflecting the intuitively
appealing thought that poor people deserve a relatively greater protection
than rich people. We examine in section 7 the case in which
the starting point is not perfect equality and generalize the results
to this broader class of contexts. Section 8 concludes. We also created a
companion website to the paper, available at \href{https://osf.io/tnu2q/}{https://osf.io/tnu2q/}, which contains interactive
visuals, apps and narratives that explain and motivate our results in simple and intuitive terms.

\section{Setting and main definitions}

Consider a population of $n$ individuals, each individual $i$ enjoying
income $y_{i}\geq0$, and an additively separable social welfare function
\[
W\left(y_{1},...,y_{n}\right)=\sum_{i=1}^{n}f\left(y_{i}\right),
\]
where $f$ is an increasing function with values taking an interval
in $\mathbb{R}\cup\left\{ -\infty,+\infty\right\} $. Throughout this
paper we restrict attention to this class of social welfare functions.

Income is used as the index of individual advantage in this paper,
but obviously the analysis is relevant to any setting in which individual
advantage is measured by a cardinally measurable and interpersonally
comparable index. In particular, it is possible to adjust income for
non-market aspects of quality of life that individuals enjoy or endure,
and use this ajusted income (usually called ``equivalent income'')
as the relevant index instead of ordinary income.

The social welfare function is symmetric, which reflects impartiality
among individuals. Therefore, we can focus on individual 1 (called
Ana, to fix ideas) in trade-offs with any number of fellow individuals.

Consider a starting point in which Ana and individual 2 (called Ben)
have the same income $y$. Then one can consider unequal allocations
that keep social welfare constant, i.e., spans allocations $\left(y_{1},y_{2}\right)$
such that 
\[
f\left(y_{1}\right)+f\left(y_{2}\right)=2f\left(y\right).
\]
In particular, focus on how Ana's income decreases when Ben's in raised.
One can define the function $y_{1}\left(y_{2},y\right)$ implicitly
by the equation
\[
f\left(y_{1}\left(y_{2},y\right)\right)+f\left(y_{2}\right)=2f\left(y\right),
\]
yielding the explicit form
\[
y_{1}\left(y_{2},y\right)=f^{-1}\left(2f\left(y\right)-f\left(y_{2}\right)\right).
\]
The function $y_{1}\left(y_{2},y\right)$ is increasing in $y$ and
decreasing in $y_{2}$, because $f$ is increasing and so is $f^{-1}$.

The domain of definition of $y_{1}\left(y_{2},y\right)$ depends on
$f$. For instance, if $f\left(y\right)=y$, then $y_{1}\left(y_{2},y\right)$
is defined only for $\left(y_{2},y\right)$ such that $y_{2}\leq2y$.
Let $D_{12}$ denote the domain of definition of $y_{1}\left(y_{2},y\right):$
\[
D_{12}=\left\{ \left(y_{2},y\right)|2f\left(y\right)-f\left(y_{2}\right)\geq f\left(0\right)\right\} .
\]

We can now introduce the notion of \emph{protected income}, which
is the level of income below which $y_{1}$ will never fall when $y_{2}$
increases. No matter what opportunities for high income are offered
to Ben, Ana's income will never fall below this level if social welfare
is preserved.

\textbf{Protected income:} For a given $y>0$, the level of protected
income is defined as
\[
\ddot{Y}\left(y\right)=\inf\left\{ y_{1}\left(y_{2},\hat{y}\right)|\left(y_{2},\hat{y}\right)\in D_{12},\hat{y}=y\right\} .
\]

Relatedly, we define \emph{collateral damage} as the sacrifice that
can be imposed on Ana for the sake of raising Ben's income.

\textbf{Collateral damage}: For a given $y>0$, the level of collateral
damage is defined as

\[
\ddot{L}\left(y\right)=y-\ddot{Y}\left(y\right).
\]
The \emph{relative} collateral damage is 
\[
\frac{\ddot{L}\left(y\right)}{y}=1-\frac{\ddot{Y}\left(y\right)}{y}.
\]

\section{Existence of a positive protected income}

Not all social welfare functions guarantee Ana against complete sacrifice
for the sake of Ben's stratospheric income. Actually, many of the
most commonly used social welfare functions do not protect against
complete sacrifice.
\begin{prop}
\label{existence}For all $y>0,$ one has $\ddot{Y}\left(y\right)>0$
if and only if $f$ has an upper bound strictly below $2f\left(y\right)-f\left(0\right).$
\end{prop}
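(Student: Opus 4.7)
The plan is to unpack the infimum $\ddot{Y}(y)$ using the explicit formula $y_1(y_2,y) = f^{-1}(2f(y)-f(y_2))$. The key preliminary observation is that a monotone function whose image is an interval is automatically continuous, so both $f$ and its inverse are continuous on their respective domains. I would flag this up front, since nothing in the statement explicitly assumes continuity yet the argument hinges on it.

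First I would reduce the infimum over admissible $y_2$ to a question about the range of $f(y_2)$. Since $f^{-1}$ is strictly increasing, minimizing $y_1(y_2,y)$ amounts to pushing $f(y_2)$ as large as possible subject to the domain constraint $f(y_2) \leq 2f(y)-f(0)$. Letting $s := \min\{\sup f,\ 2f(y)-f(0)\}$, continuity of $f^{-1}$ then gives
\[
\ddot{Y}(y) = f^{-1}(2f(y) - s),
\]
interpreted as the appropriate one-sided limit if the value $s$ is not itself attained by $f(y_2)$ for any admissible $y_2$.

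For the ``if'' direction I would assume $\sup f < 2f(y)-f(0)$, so $s = \sup f$. Then $2f(y)-s > f(0)$, and since $f^{-1}$ is strictly increasing with $f^{-1}(f(0)) = 0$, we obtain $\ddot{Y}(y) > 0$. For the ``only if'' direction I would assume $\sup f \geq 2f(y)-f(0)$, so $s = 2f(y)-f(0)$. Since $f(y) < 2f(y)-f(0) \leq \sup f$ and the range of $f$ is an interval containing both $f(0)$ and $f(y)$, I can pick a sequence $y_2^k$ with $f(y_2^k) \nearrow 2f(y)-f(0)$; continuity of $f^{-1}$ then forces $y_1(y_2^k,y) \to f^{-1}(f(0)) = 0$, so $\ddot{Y}(y) = 0$.

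The only real subtlety is the borderline case $\sup f = 2f(y)-f(0)$ combined with the possibility that $\sup f$ is not actually attained. That is precisely where the interval-image hypothesis earns its keep: it rules out jump discontinuities, so values of $f$ really do approach $2f(y)-f(0)$ from below, and that is what forces $\ddot{Y}(y) = 0$ in the borderline case. Once this point is secured the rest of the argument is routine bookkeeping on monotone inverses.
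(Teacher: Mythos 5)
Your proof is correct and takes essentially the same route as the paper's: both reduce the infimum to pushing $f\left(y_{2}\right)$ up toward $\min\left\{ \sup f,\,2f\left(y\right)-f\left(0\right)\right\}$ subject to the domain constraint and then read off the behavior of $y_{1}$ through the monotone inverse, splitting on whether $\sup f$ lies strictly below $2f\left(y\right)-f\left(0\right)$ or not. Your explicit appeal to the interval-image hypothesis to secure continuity of $f$ and $f^{-1}$ makes precise a step the paper leaves implicit, but it does not change the argument.
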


\begin{proof}
Let $y$ be given. If $f$ has an upper bound strictly below $2f\left(y\right)-f\left(0\right),$
then there is $\varepsilon$ such that for all $y_{2}\geq0,$
\[
f\left(y_{2}\right)<2f\left(y\right)-f\left(0\right)-\varepsilon,
\]
implying that if $f\left(y_{1}\right)+f\left(y_{2}\right)=2f\left(y\right)>f\left(0\right)+f\left(y_{2}\right)+\varepsilon,$
then necessarily $f\left(y_{1}\right)>f\left(0\right)+\varepsilon,$
so that $y_{1}$ is bounded away from zero. In this case, $\ddot{Y}\left(y\right)>0.$

If $f$ does not have an upper bound strictly below $2f\left(y\right)-f\left(0\right),$
then for all $\varepsilon$, there is $y_{2}$ such that 
\[
f\left(y_{2}\right)>2f\left(y\right)-f\left(0\right)-\varepsilon,
\]
implying that if $f\left(y_{1}\right)+f\left(y_{2}\right)=2f\left(y\right)<f\left(0\right)+f\left(y_{2}\right)+\varepsilon,$
then necessarily $f\left(y_{1}\right)<f\left(0\right)+\varepsilon,$
so that $y_{1}$ is not bounded away from zero. In this case, $\ddot{Y}\left(y\right)=0.$ 
\end{proof}
To illustrate this result, consider the class of Kolm-Atkinson social
welfare functions, for which 
\[
f\left(y\right)=\left\{ \begin{array}{ccccccc}
\frac{y^{1-\eta}}{1-\eta} & \text{} &  &  &  & \text{for} & \eta\neq1,\eta\geq0\\
\ln y & \text{} &  &  &  & \text{for} & \eta=1
\end{array}\right.
\]
where $\eta\geq0$ is the coefficient of inequality aversion.

When $\eta\leq1$, $f$ has no upper bound and thus $\ddot{Y}\left(y\right)=0$
for all $y>0.$ When $\eta>1,$ one has $f\left(0\right)=-\infty$,
so that $2f\left(y\right)-f\left(0\right)=+\infty$, whereas $f$
is always negative. Thus, $\ddot{Y}\left(y\right)>0$ for all $y>0.$
More precisely, one has 
\[
\ddot{Y}\left(y\right)=f^{-1}\left(2f\left(y\right)\right)=2^{\frac{1}{1-\eta}}y.
\]

This is a quite significant result because values of $\eta=0.5$ and
$1$ are frequent in the applied literature on inequalities. With
such values, the ultimate sacrifice of giving all of one's income
for the sake of a billionaire is considered acceptable, and this appears
hard to defend. If one would like to avoid such a repugnant conclusion,
it is imperative to pick $\eta>1$.

Another important class is formed by the Kolm-Pollak social welfare
functions, for which 
\[
f\left(y\right)=\left\{ \begin{array}{ccccccc}
-e^{-\alpha y} & \text{} &  &  &  & \text{for} & \alpha>0\\
y & \text{} &  &  &  & \text{for} & \alpha=0
\end{array}\right.
\]
where $\alpha\geq0$ is the coefficient of inequality aversion. When
$\alpha=0,$ one obviously has $\ddot{Y}\left(y\right)=0$ for all
$y>0.$ When $\alpha>0$, $f$ is negative with upper bound zero,
whereas $2f\left(y\right)-f\left(0\right)$ is non-positive when $y\leq\frac{\ln2}{\alpha}$,
and thus $\ddot{Y}\left(y\right)=0$ for all $y\leq\frac{\ln2}{\alpha}.$
When $y>\frac{\ln2}{\alpha}$, one has 
\[
\ddot{Y}\left(y\right)=f^{-1}\left(2f\left(y\right)\right)=y-\frac{\ln2}{\alpha}.
\]

Unlike the Kolm-Atkinson class, members of the Kolm-Pollak class never
guarantee a positive protected level of income for all values of $y,$
no matter how high the coefficient of inequality aversion is. This
can be seen as a serious disadvantage for this class, although the
level of $y$ below which the protected income falls to zero becomes
vanishingly small when $\alpha$ is high.

A graphical representation of $\ddot{Y}\left(y\right)$ is easy to
build upon the graph of $f$. Given that $f$ must have an upper bound,
there is no loss of generality to let $f\left(+\infty\right)=0.$
One then has $\ddot{Y}\left(y\right)=f^{-1}\left(2f\left(y\right)\right).$
This is illustrated in Fig. 1.

\begin{figure}
\begin{centering}
\includegraphics[scale=0.5]{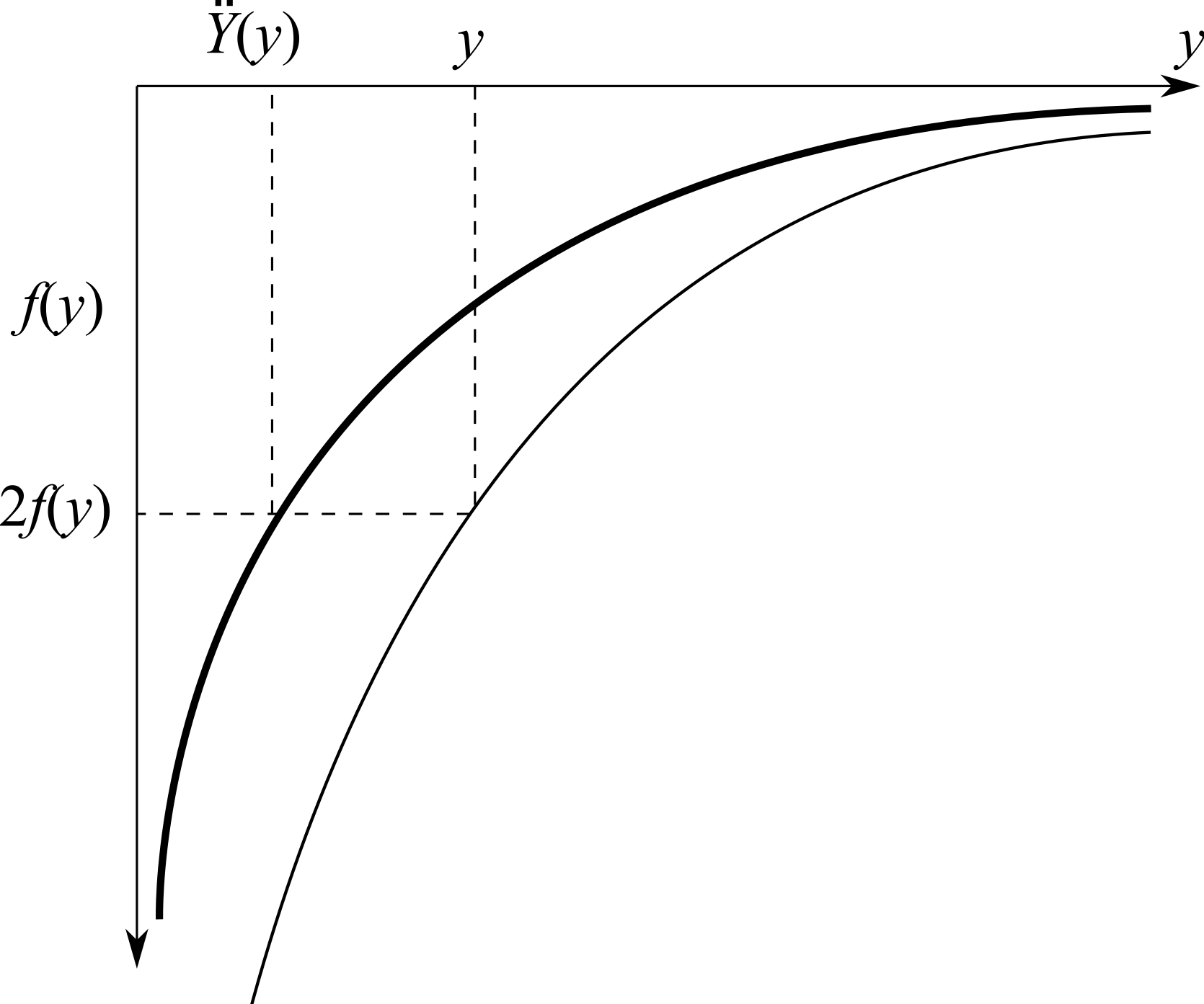}
\par\end{centering}
\caption{Representation of $\ddot{Y}\left(y\right)$}

\end{figure}

\section{Protecting a fraction of income}

Consider again the Kolm-Atkinson social welfare functions, for the
relevant case $\eta>1.$ Since
\[
\ddot{Y}\left(y\right)=2^{\frac{1}{1-\eta}}y,
\]
the protected income is always a fixed fraction of the initial (equal)
income. The fraction $2^{\frac{1}{1-\eta}}$ tends to zero when $\eta$
is close to 1, and tends to one when $\eta$ is very high. So, this
fraction spans the whole range, which connects the coefficient of
inequality aversion to the protected income in a convenient way.

The next proposition characterizes the class of social welfare functions
protecting a fixed fraction of income.
\begin{prop}
\label{prop:fixed prop12}The subclass of social welfare functions
satisfying, for all $y>0$ and a fixed $\lambda\in\left(0,1\right)$,
the property $\ddot{Y}\left(y\right)=\lambda y$ is generated by the
function 
\[
f\left(y\right)=-e^{g\left(\ln y\right)},
\]
where $g$ is defined as follows. It is an arbitrary continuous decreasing
function over the interval $\left[0,-\ln\lambda\right]$ such that
$g\left(\ln\lambda\right)-g\left(0\right)=\ln2$, and over the rest
of $\mathbb{R}$ it is defined in a periodic way, i.e., $g\left(\left(k+1\right)\ln\lambda+a\right)-g\left(k\ln\lambda+a\right)=\ln2$
for all $k\in\mathbb{Z}_{+},$$a\in\left[0,-\ln\lambda\right]$.
\end{prop}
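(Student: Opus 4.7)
The plan is to translate the property $\ddot{Y}(y)=\lambda y$ into a functional equation on $f$, and then classify the continuous solutions of that equation.

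First I would normalize $f$ so that the definition of $\ddot{Y}$ simplifies. Since $\ddot{Y}(y)>0$ for every $y>0$, Proposition \ref{existence} forces $f$ to be bounded above; because the social ordering is invariant under $f\mapsto af+b$ with $a>0$, I may assume $\sup f=0$, so $f<0$ on $(0,\infty)$. The same proposition, applied as $y\downarrow 0$, forces $\lim_{y\downarrow 0}f(y)=-\infty$, since otherwise $2f(y)-f(0)$ would stay at or below $\sup f$ for small $y$ and yield $\ddot{Y}(y)=0$, contradicting $\ddot{Y}(y)=\lambda y>0$. Under these two normalizations, the infimum in the definition of $\ddot{Y}(y)$ is attained in the limit $y_2\to\infty$, so $\ddot{Y}(y)=f^{-1}\!\left(2f(y)\right)$. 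The requirement $\ddot{Y}(y)=\lambda y$ therefore becomes
\[
f(\lambda y)=2f(y)\qquad\text{for all }y>0.
\]

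Next I would pass to additive form via $y=e^{x}$. Because $f<0$ on $(0,\infty)$ I may write $f(e^{x})=-e^{g(x)}$ for a real-valued function $g$, and the functional equation becomes
\[
g(x+\ln\lambda)=g(x)+\ln 2\qquad\text{for all }x\in\mathbb{R}.
\]
Monotonicity of $f$ translates into $g$ being decreasing, since $-e^{g(x)}$ must be increasing in $x$.

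Finally I would classify the continuous decreasing solutions of this translation equation. Iteration shows that $g$ is completely determined by its values on one fundamental period $[0,-\ln\lambda]$: once $g$ is prescribed there, the rule $g(x+k\ln\lambda)=g(x)+k\ln 2$ for $k\in\mathbb{Z}$ extends $g$ uniquely to all of $\mathbb{R}$. Conversely, any continuous decreasing function on $[0,-\ln\lambda]$ whose endpoint values are consistent with the shift rule, i.e.\ $g(\ln\lambda)-g(0)=\ln 2$ in the proposition's notation, produces a continuous decreasing solution on $\mathbb{R}$, and inverting the substitution gives $f(y)=-e^{g(\ln y)}$ in the claimed form. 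I expect the main obstacle to be the first step: carefully extracting both normalizations ($\sup f=0$ and $\lim_{y\downarrow 0}f(y)=-\infty$) from the hypothesis, and arguing that the infimum in the definition of $\ddot{Y}$ is genuinely attained in the limit $y_2\to\infty$ so that the clean equation $f(\lambda y)=2f(y)$ is equivalent to $\ddot{Y}(y)=\lambda y$. Once that equation is in hand, the analysis of translation-equivariant functions on $\mathbb{R}$ is routine.
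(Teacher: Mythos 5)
Your proposal is correct and follows essentially the same route as the paper's proof: normalize $f(+\infty)=0$ via Proposition \ref{existence}, reduce $\ddot{Y}(y)=\lambda y$ to the functional equation $f(\lambda y)=2f(y)$, substitute $g=\ln(-f\circ\exp)$ to obtain the translation equation $g(x+\ln\lambda)=g(x)+\ln 2$, and classify its continuous decreasing solutions by prescribing $g$ freely on one fundamental period. Your added care in extracting $\lim_{y\downarrow 0}f(y)=-\infty$ and in justifying that the infimum is attained as $y_2\to\infty$ makes explicit a step the paper leaves implicit, but it does not change the argument.
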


\begin{proof}
From Prop. \ref{existence}, $f$ must have an upper bound. There
is no loss of generality in positing $f\left(+\infty\right)=0.$ (Note
that $f$ is thus negative.) One then has $\ddot{Y}\left(y\right)=f^{-1}\left(2f\left(y\right)\right).$

We are looking for the class of $f$ satisfying, for all $y>0$: 
\[
f^{-1}\left(2f\left(y\right)\right)=\lambda y
\]
for a fixed $\lambda.$ This can be written as 
\[
2f\left(y\right)=f\left(\lambda y\right).
\]
Letting $g=\ln\left(-f\circ e\right)$, a change of variables $x=\ln y$
(note that $g$ is decreasing and is defined on $\mathbb{R}$) yields
\[
\ln2+g\left(x\right)=g\left(\ln\lambda+x\right).
\]
This property is satisfied by any function $g$ such that for all
$k\in\mathbb{Z}_{+}$,
\[
g\left(k\ln\lambda\right)=g\left(0\right)+k\ln2
\]
(this defines a line of points of intercept $g\left(0\right)$ and
negative slope $\ln2/\ln\lambda$) and $g$ is, suitably detrended,
periodic across intervals $\left[\left(k+1\right)\ln\lambda,k\ln\lambda\right]$,
i.e., for $a\in\left[0,-\ln\lambda\right]$, 
\[
g\left(k\ln\lambda+a\right)+\ln2=g\left(\left(k+1\right)\ln\lambda+a\right).
\]

Once such a function has been picked, one then retrieves 
\[
f\left(y\right)=-e^{g\left(\ln y\right)}.
\]
\end{proof}
Since the graph of $g$ contains a line of points, the case of a linear
$g$ is obviously salient. One then has $g\left(x\right)=x\ln2/\ln\lambda$,
and
\[
f\left(y\right)=-y^{\ln2/\ln\lambda},
\]
which brings us back to the Kolm-Atkinson class, for $\eta=1-\ln2/\ln\lambda,$
and this implies that the fraction of protected income is indeed $2^{\frac{1}{1-\eta}}.$ 

Let us now extend the notion of protected income to the case in which
individuals 2 and 3 see their incomes rise without limit. Let the
function $y_{1}\left(y_{2},y_{3},y\right)$ be defined implicitly
by the equation
\[
f\left(y_{1}\left(y_{2},y\right)\right)+f\left(y_{2}\right)+f\left(y_{3}\right)=3f\left(y\right),
\]
and explicitly as
\[
y_{1}\left(y_{2},y_{3},y\right)=f^{-1}\left(3f\left(y\right)-f\left(y_{2}\right)-f\left(y_{3}\right)\right).
\]
It is increasing in $y$ and decreasing in both $y_{2},y_{3}.$

The domain of definition of $y_{1}\left(y_{2},y_{3},y\right)$, denoted
$D_{123}$ is defined as follows
\[
D_{12}=\left\{ \left(y_{2},y_{3},y\right)|3f\left(y\right)-f\left(y_{2}\right)-f\left(y_{3}\right)\geq f\left(0\right)\right\} .
\]

\textbf{Protected income against two:} For a given $y>0$, the level
of protected income against two other individuals is defined as
\[
\dddot{Y}\left(y\right)=\inf\left\{ y_{1}\left(y_{2},y_{3},\hat{y}\right)|\left(y_{2},y_{3},\hat{y}\right)\in D_{123},\hat{y}=y\right\} .
\]

In the case of the Kolm-Atkinson class, one obtains
\[
\dddot{Y}\left(y\right)=3^{\frac{1}{1-\eta}}y,
\]
so that once again, a fixed fraction of $y$ is guaranteed to Ana.
If one combines the requirement to protect a fixed (different) fraction
of income against one and two other individuals, one obtains a characterization
of this class (up to an affine transform).
\begin{prop}
\label{prop:fixedprop123}The Kolm-Atkinson subclass of social welfare
functions is (up to an affine transform) characterized by the requirement
that there exist $\lambda,\mu\in\left(0,1\right)$ such that for all
$y>0$, $\ddot{Y}\left(y\right)=\lambda y$ and $\dddot{Y}\left(y\right)=\mu y.$
One must then have $\lambda=2^{\frac{1}{1-\eta}}$ and $\mu=3^{\frac{1}{1-\eta}}$
for some $\eta>1$.
\end{prop}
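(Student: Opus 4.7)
The plan is to combine the two simultaneous protection conditions into a pair of multiplicative functional equations for $f$, pass to logarithmic coordinates as in Proposition \ref{prop:fixed prop12}, and then exploit the irrationality of $\log_2 3$ to eliminate the periodic slack that Proposition \ref{prop:fixed prop12} left open.

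As in the proof of Proposition \ref{prop:fixed prop12}, Proposition \ref{existence} forces $f$ to be bounded above; normalizing $f(+\infty)=0$, one has $\ddot{Y}(y)=f^{-1}(2f(y))$ and $\dddot{Y}(y)=f^{-1}(3f(y))$, so the two hypotheses become
\[
f(\lambda y)=2f(y),\qquad f(\mu y)=3f(y),\qquad y>0.
\]
Setting $g(x):=\ln(-f(e^{x}))$ as before, these translate into the additive relations
\[
g(x+\ln\lambda)=g(x)+\ln 2,\qquad g(x+\ln\mu)=g(x)+\ln 3.
\]
I would then detrend: letting $\alpha:=\ln 2/\ln\lambda$ and $h(x):=g(x)-\alpha x$, the first equation becomes $h(x+\ln\lambda)=h(x)$, so $h$ is continuous, periodic with period $|\ln\lambda|$, and hence bounded. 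The second equation rewrites as $h(x+\ln\mu)-h(x)=\ln 3-\alpha\ln\mu$, which iterated gives $h(x+n\ln\mu)=h(x)+n(\ln 3-\alpha\ln\mu)$ for every $n\in\mathbb{Z}$. Since $h$ is bounded, this forces $\ln 3=\alpha\ln\mu$, equivalently
\[
\frac{\ln\mu}{\ln\lambda}=\frac{\ln 3}{\ln 2}.
\]

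At this point $h$ has two periods, $|\ln\lambda|$ and $|\ln\mu|$, whose ratio $\ln 2/\ln 3$ is irrational (since $2^{p}=3^{q}$ has no nonzero integer solutions). A standard Kronecker-type argument — the subgroup $|\ln\lambda|\mathbb{Z}+|\ln\mu|\mathbb{Z}$ is dense in $\mathbb{R}$ and $h$ is constant on it — implies that the continuous function $h$ must be constant. Hence $g(x)=\alpha x+c$ and $f(y)=-e^{c}\,y^{\alpha}$. Writing $\eta:=1-\alpha$, one has $\alpha<0$ (because $\ln\lambda<0$), so $\eta>1$, and $f$ is a positive affine transform of the Kolm-Atkinson function $y^{1-\eta}/(1-\eta)$. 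Inverting $\alpha=\ln 2/\ln\lambda$ gives $\lambda=2^{1/(1-\eta)}$, and the relation $\ln\mu/\ln\lambda=\ln 3/\ln 2$ then yields $\mu=3^{1/(1-\eta)}$, as claimed.

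The main obstacle is the middle step: converting the second functional equation into the exact commensurability constraint $\ln\mu/\ln\lambda=\ln 3/\ln 2$ via the boundedness of $h$. Once that is in hand, the incommensurable-periods argument killing the residual periodic function of Proposition \ref{prop:fixed prop12} is routine, and identifying the Kolm-Atkinson parameter $\eta$ is pure bookkeeping.
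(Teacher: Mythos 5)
Your proof is correct and follows essentially the same route as the paper's: boundedness of the detrended (quasi-periodic) function forces $\ln 2/\ln\lambda=\ln 3/\ln\mu$, and the irrationality of $\log_{2}3$ then yields a dense set on which $f$ agrees with a power function, with continuity completing the identification. The only cosmetic difference is that you phrase the final step additively (a continuous function with two incommensurable periods is constant) where the paper works multiplicatively with the dense set $S=\{2^{m/(1-\eta)}3^{n/(1-\eta)}\mid m,n\in\mathbb{Z}\}$ following Tossavainen and Haukkanen; these are the same argument in different coordinates.
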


\begin{proof}
From Prop. \ref{prop:fixed prop12} we know that $f\left(y\right)=-e^{g\left(\ln y\right)}$
for a function $g$ such that 
\[
g\left(\left(k+1\right)\ln\lambda+a\right)-g\left(k\ln\lambda+a\right)=\ln2
\]
for all $k\in\mathbb{Z}_{+},$$a\in\left[0,-\ln\lambda\right]$. By
a similar reasoning, the function $g$ must also satisfy
\[
g\left(\left(k+1\right)\ln\mu+a\right)-g\left(k\ln\mu+a\right)=\ln3
\]
for all $k\in\mathbb{Z}_{+},$$a\in\left[0,-\ln\mu\right]$. In particular,
for all $k\in\mathbb{Z}_{+},$
\[
g\left(k\ln\lambda\right)-g\left(0\right)=k\ln2
\]
\[
g\left(k\ln\mu\right)-g\left(0\right)=k\ln3.
\]
Given the quasi-periodic behavior of $g$, there exist $b<0<c$ such
that for all $x\in\mathbb{R}$, 
\[
b<g\left(x\right)-g\left(0\right)-x\frac{\ln2}{\ln\lambda}<c
\]
\[
b<g\left(x\right)-g\left(0\right)-x\frac{\ln3}{\ln\mu}<c.
\]
This is possible only if 
\[
\frac{\ln2}{\ln\lambda}=\frac{\ln3}{\ln\mu}.
\]
Let $1-\eta$ be this common (negative) ratio. This proves that $\lambda=2^{\frac{1}{1-\eta}}$
and $\mu=3^{\frac{1}{1-\eta}}$ for some $\eta>1$.

Without loss of generality (i.e., up to an affine transform of $W$),
we can let $\ensuremath{f}\left(+\infty\right)=0$ and $\ensuremath{f}\left(1\right)=\frac{1}{1-\text{\ensuremath{\eta}}}$.
Then

\[
\left\{ \begin{array}{c}
f\text{\ensuremath{\left(2^{\ensuremath{\frac{1}{1-\eta}}}y\right)}}=2\ensuremath{f}\left(y\right)\\
f\text{\ensuremath{\left(3^{\ensuremath{\frac{1}{1-\eta}}}y\right)}}=3\ensuremath{f}\left(y\right)
\end{array}\right.
\]

The rest of the argument follows the proof of the first half of Theorem
A in Tossavainen and Haukkanen (2007). Let $S=\left\{ \left.2^{\ensuremath{\frac{m}{1-\eta}}}3^{\ensuremath{\frac{n}{1-\eta}}}\right|m,n\in\mathbb{Z}\right\} $
and pick $y$$\in S.$ Then, 

\[
f\left(y\right)=f\left(2^{\ensuremath{\frac{m}{1-\eta}}}3^{\ensuremath{\frac{n}{1-\eta}}}\right)=2^{m}f\left(3^{\ensuremath{\frac{n}{1-\eta}}}\right)=2^{m}3^{n}f\left(1\right)
\]
\[
=\left(2^{\ensuremath{\frac{m}{1-\eta}}}3^{\ensuremath{\frac{n}{1-\eta}}}\right)^{1-\eta}f\left(1\right)=f\left(1\right)y^{1-\eta}=\frac{y^{1-\eta}}{1-\eta}.
\]

We have established that $f\left(y\right)=\frac{y^{1-\eta}}{1-\eta}$
for $y\in S.$ In order to extend the result to all $y$ in the interval
$\left(0,+\infty\right)$ we need to show that $S$ is dense in $\left(0,+\infty\right).$
Pick $y$ in $\left(0,+\infty\right)$ and let $T=\left\{ \left.m+n\log_{2}3\right|m,n\in\mathbb{Z}\right\} $.
Since $\log_{2}3$ is irrational this means that $T$ is dense in
$\mathbb{R}$. Since $\left(1-\eta\right)\log_{2}y\in\mathbb{R}$,
there is a sequence $\left\{ t_{j}\right\} $$\subset T$ such that
\[
\lim_{j\rightarrow\infty}t_{j}=\left(1-\eta\right)\log_{2}y,
\]
which implies that $\lim_{j\rightarrow\infty}\frac{t_{j}}{1-\eta}=\log_{2}y,$
or $\lim_{j\rightarrow\infty}2^{\frac{t_{j}}{1-\eta}}=y.$ The next
step is to show that $\left\{ 2^{\frac{t_{j}}{1-\eta}}\right\} \subset S.$
This is readily seen, since $2^{\frac{t_{j}}{1-\eta}}=2^{\frac{m+n\log_{2}3}{1-\eta}}=2^{\ensuremath{\frac{m}{1-\eta}}}\left(2^{\log_{2}3}\right)^{\ensuremath{\frac{n}{1-\eta}}}=2^{\ensuremath{\frac{m}{1-\eta}}}3^{\ensuremath{\frac{n}{1-\eta}}}$
for some $m,n\in\mathbb{Z}$, which means that $2^{\frac{t_{j}}{1-\eta}}\in S$
for each $j$. It follows that $S$ is dense in $\left(0,+\infty\right)$
and therefore that $\ensuremath{f}\left(y\right)=\frac{y^{1-\eta}}{1-\eta}$
for all $y\in\left(0,+\infty\right)$, which is what we wanted to
show.
\end{proof}

\section{Constant Collateral Damage}

We have criticized the Kolm-Pollak social welfare functions for not
guaranteeing a positive income when the initial income is low. But
it is worth revisiting the property of a constant collateral damage,
i.e., a protected income that is always at the same distance from
the initial income, provided the latter is greater than this constant
collateral damage.

Here again, we assume that $f$ has an upper bound and that $f\left(+\infty\right)=0.$
The next proposition characterizes the class of social welfare functions
exhibiting constant collateral damage.
\begin{prop}
\label{prop:constcodam21}The subclass of social welfare functions
satisfying, for a fixed $\varDelta\in\left(0,+\infty\right)$ and
all $y>\varDelta$, the property $\ddot{Y}\left(y\right)=y-\varDelta$
is generated by the function 
\[
f\left(y\right)=-e^{g\left(y\right)},
\]
where $g$ is defined as follows. It is an arbitrary continuous decreasing
function over the interval $\left[0,\varDelta\right]$ such that $g\left(\varDelta\right)-g\left(0\right)=-\ln2$,
and over the rest of $\left(0,+\infty\right)$ it is defined in a
periodic way, i.e., $g\left(\left(k+1\right)\varDelta+a\right)-g\left(k\varDelta+a\right)=-\ln2$
for all $k\in\mathbb{Z}_{+},$ $a\in\left[0,\varDelta\right]$.
\end{prop}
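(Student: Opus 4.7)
The plan is to parallel the proof of Proposition~\ref{prop:fixed prop12}, but to exploit the additive structure of the target property $\ddot{Y}(y) = y - \varDelta$ rather than a multiplicative one.

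First, by Proposition~\ref{existence}, the hypothesis that $\ddot{Y}(y) > 0$ for all $y > \varDelta$ forces $f$ to have a finite upper bound, which I normalize to $f(+\infty) = 0$; then $f < 0$ everywhere and $\ddot{Y}(y) = f^{-1}(2f(y))$. The target condition thus rewrites as the functional equation
$$2f(y) = f(y - \varDelta), \qquad y > \varDelta.$$

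Next, I perform the change of variables $g(y) = \ln(-f(y))$, which is legitimate because $-f$ is strictly positive. Substituting transforms the equation into the additive form
$$g(y - \varDelta) - g(y) = \ln 2, \qquad y > \varDelta,$$
equivalently $g(y + \varDelta) - g(y) = -\ln 2$ for all $y > 0$. I also note that $f$ increasing is equivalent to $g$ decreasing, and that $f$ is recovered from $g$ via $f = -e^{g}$.

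Then I observe that any solution $g$ is completely determined by its restriction to $[0,\varDelta]$ together with this shift rule, and that the restriction can be any continuous decreasing function on $[0,\varDelta]$ satisfying the compatibility condition $g(\varDelta) - g(0) = -\ln 2$. Iterating the shift yields the stated quasi-periodic extension $g((k+1)\varDelta + a) - g(k\varDelta + a) = -\ln 2$ for all $k \in \mathbb{Z}_+$ and $a \in [0, \varDelta]$. Conversely, any $g$ built this way produces an $f = -e^{g}$ that is increasing, bounded above by $0$, and satisfies $2f(y) = f(y-\varDelta)$, hence $\ddot{Y}(y) = y - \varDelta$. The main (mild) obstacle is the bookkeeping at the junction points $y = k\varDelta$: I must check that the quasi-periodic rule glues the pieces of $g$ on successive intervals $[k\varDelta,(k+1)\varDelta]$ into a globally continuous and decreasing function, which follows from the endpoint condition $g(\varDelta)-g(0)=-\ln 2$. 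This step is considerably easier than its counterpart in Proposition~\ref{prop:fixed prop12} because the relevant shift is already additive in $y$ and requires no logarithmic change of variables.
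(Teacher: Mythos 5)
Your proposal is correct and follows essentially the same route as the paper: reduce the protection condition to the functional equation $f\left(y-\varDelta\right)=2f\left(y\right)$, pass to $g=\ln\left(-f\right)$ to obtain the additive shift $g\left(y+\varDelta\right)-g\left(y\right)=-\ln2$, and read off the quasi-periodic structure on successive intervals $\left[k\varDelta,\left(k+1\right)\varDelta\right]$. Your version is in fact slightly more careful than the paper's (which writes $g=\ln f$ where $\ln\left(-f\right)$ is meant, and does not spell out the gluing/converse step), but the substance is identical.
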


\begin{proof}
Constant collateral damage means that for all $y>\varDelta,$
\[
f\left(y-\varDelta\right)=2f\left(y\right).
\]
Let $g=\ln f.$ For all $y>0,$ one has 
\[
g\left(y+\varDelta\right)=g\left(y\right)-\ln2.
\]
For $y=0,$ one obtains $g\left(\varDelta\right)-g\left(0\right)=-\ln2,$
and for $y=k\varDelta+a,$ one obtains $g\left(\left(k+1\right)\varDelta+a\right)-g\left(k\varDelta+a\right)=-\ln2.$ 
\end{proof}
In similar fashion as in the previous section, one can add the requirement
that constant collateral damage also occurs against two other individuals,
and derive a characterization of the Kolm-Pollak class of social welfare
functions.
\begin{prop}
\label{prop:The-Kolm-Pollak-subclass}The Kolm-Pollak subclass of
social welfare functions is (up to an affine transform and above a
minimum level of income) characterized by the requirement that there
exist $\varOmega>\varDelta>0$ such that for all $y>\varOmega$, $\ddot{Y}\left(y\right)=y-\varDelta$
and $\dddot{Y}\left(y\right)=y-\varOmega.$ One must then have $\varDelta=\frac{\ln2}{\alpha}$
and $\varOmega=\frac{\ln3}{\alpha}$ for some $\alpha>0$.
\end{prop}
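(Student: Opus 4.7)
The plan is to follow the logical structure of Proposition \ref{prop:fixedprop123} but in the additive setting that Proposition \ref{prop:constcodam21} attaches to Kolm--Pollak preferences. Applying Proposition \ref{prop:constcodam21} directly gives $f(y)=-e^{g(y)}$ with
\[
g(y+\varDelta)-g(y)=-\ln 2
\]
on the relevant range. Re-running that same derivation for three individuals, the hypothesis $\dddot{Y}(y)=y-\varOmega$ translates into $f(y-\varOmega)=3f(y)$ (since $y_{1}(y_{2},y_{3},y)$ attains its infimum as $y_{2},y_{3}\to\infty$ and $f(+\infty)=0$), and passing to $g=\ln(-f)$ yields the companion functional equation
\[
g(y+\varOmega)-g(y)=-\ln 3.
\]

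Next I would extract the common slope. Setting $h_{1}(y):=g(y)+(\ln 2/\varDelta)\,y$ makes $h_{1}$ automatically $\varDelta$-periodic, and analogously $h_{2}(y):=g(y)+(\ln 3/\varOmega)\,y$ is $\varOmega$-periodic. Continuity of $g$ on a fundamental interval makes both $h_{1}$ and $h_{2}$ bounded, so their difference $h_{2}(y)-h_{1}(y)=\bigl(\ln 3/\varOmega-\ln 2/\varDelta\bigr)\,y$ must also be bounded on $(\varOmega,+\infty)$; this forces the coefficient to vanish. Setting $\alpha:=\ln 2/\varDelta=\ln 3/\varOmega>0$ already delivers $\varDelta=\ln 2/\alpha$ and $\varOmega=\ln 3/\alpha$.

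With the common slope secured, $h:=h_{1}=h_{2}$ is periodic in both $\varDelta$ and $\varOmega$. Because $\varOmega/\varDelta=\log_{2}3$ is irrational, the additive subgroup $\varDelta\mathbb{Z}+\varOmega\mathbb{Z}$ is dense in $\mathbb{R}$, so $h$ admits a dense set of periods; continuity then forces $h\equiv c$ for some constant $c$. Hence $g(y)=-\alpha y+c$ and $f(y)=-e^{c}e^{-\alpha y}$, which, after the affine transform that divides by $e^{c}$, is the Kolm--Pollak function $-e^{-\alpha y}$ with coefficient $\alpha>0$. The main obstacle I anticipate is the domain bookkeeping: Proposition \ref{prop:constcodam21} pins down the functional equation for $g$ only above $\varDelta$, and its three-individual analog only above $\varOmega$, so the argument must be executed on $(\varOmega,+\infty)$ where both equations simultaneously hold and the periodic-to-constant collapse of $h$ proceeds without boundary pathologies --- matching exactly the ``above a minimum level of income'' caveat in the statement.
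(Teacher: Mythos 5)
Your proof is correct, and it diverges from the paper's in an instructive way in its second half. The first half is essentially identical: you derive the two functional equations $g(y+\varDelta)-g(y)=-\ln 2$ and $g(y+\varOmega)-g(y)=-\ln 3$, detrend $g$ linearly, and use boundedness of the (continuous, periodic) detrended functions to force $\ln 2/\varDelta=\ln 3/\varOmega=:\alpha$ --- this is exactly the paper's ``quasi-periodic behavior'' argument with the bounds $b<\cdot<c$. Where you differ is in pinning down $f$ itself: the paper changes variables via $h=f\circ\ln$ to convert the problem into the multiplicative Schr\"oder-type equations $h(2^{-1/\alpha}x)=2h(x)$, $h(3^{-1/\alpha}x)=3h(x)$, and then follows Tossavainen--Haukkanen, computing $h$ explicitly on the set $S=\{2^{m/(1-\eta)}3^{n/(1-\eta)}\}$ and extending by density of $S$ (which again rests on the irrationality of $\log_2 3$). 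You instead stay in the additive setting and invoke the classical fact that a continuous function with two incommensurable periods ($\varDelta\mathbb{Z}+\varOmega\mathbb{Z}$ dense because $\varOmega/\varDelta=\log_2 3$ is irrational) is constant, so $g$ is affine and $f(y)=-e^{c}e^{-\alpha y}$ directly. The two arguments are mathematically equivalent --- both reduce to irrationality of $\log_2 3$ plus continuity --- but yours is more self-contained (no external reference, no second change of variables) and makes the role of the incommensurability more transparent; the paper's route has the advantage of reusing verbatim the machinery already built for Proposition \ref{prop:fixedprop123}. Your closing remark on domain bookkeeping is also apt and matches the ``above a minimum level of income'' qualifier: the constancy-of-$h$ argument must be run on $(\varOmega,+\infty)$, navigating within that half-line by first shifting up and then down, which works since the relevant cosets remain dense there.
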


\begin{proof}
See Appendix \ref{subsec:Proof-of-Proposition-5}.
\end{proof}

\section{Progressive Protection}

The proportional protected income, with $\eta>1,$ is an improvement
over the constant collateral damage in the sense that in the latter
case, collateral damage, in proportion to income, is greater at lower
levels of income. However, the fact that \emph{relative} collateral
damage is indeed constant with the Kolm-Atkinson approach, no matter
what level of income one is considering, may be viewed as problematic,
too. One might want to consider orderings that protect a higher fraction
of Ana's income (and therefore allow lower relative collateral damage)
the lower her income is. We turn our attention now to a novel class
of orderings where this would be so. 

Fix a parameter $c>0.$ Let $f$ be defined, for $y\geq c,$ by:
\[
f\left(y\right)=\left\{ \begin{array}{ccccccc}
\frac{\left(\ln\frac{y}{c}\right)^{1-\gamma}}{1-\gamma} & \text{} &  &  &  & \text{for} & \gamma\neq1,\gamma\geq0\\
\ln\ln\frac{y}{c} & \text{} &  &  &  & \text{for} & \gamma=1.
\end{array}\right.
\]

We first note that, once again, $\gamma\leq1$ does not embed enough
inequality aversion in that, for those social welfare functions, there
is no substantial protection. This follows directly from the fact
that $f$ has no upper bound in this case, so that $2f\left(y\right)-f\left(y_{1}\right)$
can be passed by $f\left(y_{2}\right)$ for $y_{2}$ great enough,
no matter how low $y_{1}>c$ is.

On the other hand, for $\gamma>1$, these social welfare functions
exhibit the feature that a higher fraction of income is protected
the lower Ana's income is and therefore the orderings have increasing
relative collateral damage.
\begin{prop}
Consider the subclass of social welfare functions generated by $f$
as defined above. Then, for all $y>c$,\textup{$\ddot{Y}\left(y\right)=y^{\ddot{\beta}}c^{1-\ddot{\beta}},$}
where \textup{$\ddot{\beta}=2^{\ensuremath{\frac{1}{1-\gamma}}}$
.}
\end{prop}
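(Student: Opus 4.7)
The plan mirrors the computation for the Kolm-Atkinson class in Section~4: first check that Proposition~\ref{existence} applies, so that the infimum in the definition of $\ddot{Y}(y)$ collapses to $f^{-1}(2f(y))$, and then invert explicitly.

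First I would restrict attention to $\gamma > 1$, which the surrounding text identifies as the only case delivering substantial protection. Writing $u = \ln(y/c)$, the function $f(y) = u^{1-\gamma}/(1-\gamma)$ maps $[c,+\infty)$ into $(-\infty,0)$ with $f(c) = -\infty$ and $\lim_{y\to\infty} f(y) = 0$. So $f$ has upper bound $0$, which is strictly below $2f(y) - f(c) = +\infty$ for every $y > c$; Proposition~\ref{existence} therefore gives $\ddot{Y}(y) > 0$, and after normalizing $f(+\infty) = 0$ (as in the previous sections) the infimum is approached as $y_2 \to \infty$, yielding
\[
\ddot{Y}(y) = f^{-1}\bigl(2f(y)\bigr).
\]

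Second, I would solve $f(z) = 2f(y)$ for $z$. Substituting the definition of $f$ and clearing the common factor $1/(1-\gamma)$ gives
\[
\bigl(\ln(z/c)\bigr)^{1-\gamma} = 2\,\bigl(\ln(y/c)\bigr)^{1-\gamma}.
\]
Both sides are positive (since $\gamma > 1$ makes $1-\gamma$ negative while the bases are positive), so raising to the reciprocal power $1/(1-\gamma)$ is legitimate and delivers $\ln(z/c) = 2^{1/(1-\gamma)} \ln(y/c) = \ddot{\beta}\,\ln(y/c)$. Exponentiating yields $z/c = (y/c)^{\ddot{\beta}}$, i.e.\ $\ddot{Y}(y) = c^{1-\ddot{\beta}}\,y^{\ddot{\beta}}$, which is the claimed formula.

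There is no genuine obstacle here: the construction is essentially the Kolm-Atkinson transformation applied to the log-shifted coordinate $\ln(y/c)$ rather than to $y$, so the calculation is a one-line translate of the Section~4 formula $\ddot{Y}(y) = 2^{1/(1-\eta)}y$ with $y \mapsto \ln(y/c)$ and $\eta \mapsto \gamma$. The only subtlety worth flagging is the role of the floor $c$: because $f(c) = -\infty$, the protection is bounded below by $c$ rather than by $0$, and indeed $c^{1-\ddot{\beta}} y^{\ddot{\beta}}$ equals $c$ at $y = c$ and exceeds $c$ for $y > c$, which is consistent with $f$ being defined only on $[c,+\infty)$. The assumption $\gamma > 1$ is essential, since otherwise $f$ is unbounded above and $\ddot{Y}(y) = 0$ by Proposition~\ref{existence}.
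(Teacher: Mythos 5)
Your proof is correct and follows essentially the same route as the paper's: both reduce the infimum to $\ddot{Y}(y)=f^{-1}\bigl(2f(y)\bigr)$ by letting $y_{2}\to\infty$ (the paper does this by writing $y_{1}(y_{2};y)$ explicitly and noting $\bigl(\ln\tfrac{y_{2}}{c}\bigr)^{1-\gamma}\to 0$, which is the same fact as $f(y_{2})\to 0$) and then invert $f$ to obtain $\ln(z/c)=2^{1/(1-\gamma)}\ln(y/c)$. Your added remarks on $\gamma>1$ and the floor at $c$ are consistent with the surrounding text and introduce no gap.
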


\begin{proof}
We compute that

\[
y_{1}\left(y_{2};y\right)=c\cdot e^{\left[2\left(\ln\frac{y}{c}\right)^{1-\gamma}-\left(\ln\frac{y_{2}}{c}\right)^{1-\gamma}\right]^{\frac{1}{1-\gamma}}}
\]

Notice that $\lim_{y_{2}\rightarrow+\infty}\left[2\left(\ln\frac{y}{c}\right)^{1-\gamma}-\left(\ln\frac{y_{2}}{c}\right)^{1-\gamma}\right]^{\frac{1}{1-\gamma}}=2^{\ensuremath{\frac{1}{1-\gamma}}}\ln\frac{y}{c},$
so that 
\[
\ddot{Y}\left(y\right)=c\cdot e^{\ln\frac{y}{c}\cdot2^{\ensuremath{\frac{1}{1-\gamma}}}}=c\cdot\left(\frac{y}{c}\right)^{2^{\ensuremath{\frac{1}{1-\gamma}}}}=y^{\ddot{\beta}}c^{1-\ddot{\beta}}.
\]
\end{proof}
The coefficient $\ddot{\beta}$ is the \emph{income elasticity of
protected income}. For $\gamma>1$ the elasticity $\ddot{\beta}$
is thus between zero and one. This implies that $\frac{\ddot{Y}\left(y\right)}{y}=\left(\frac{c}{y}\right)^{1-\ddot{\beta}}$
is a decreasing function of equivalent income, and consequently the
relative collateral damage $1-\left(\frac{c}{y}\right)^{1-\ddot{\beta}}$
is an increasing function of equivalent income, as desired.

Note that $\ddot{\beta}$ is increasing in $\gamma$, that $\lim_{\gamma\rightarrow\infty}2^{\ensuremath{\frac{1}{1-\gamma}}}=1$,
and that $\lim_{\gamma\downarrow1}2^{\ensuremath{\frac{1}{1-\gamma}}}=0$.
In words: a large $\gamma$ corresponds to the situation where most
of Ana's equivalent income is protected as Ben's income grows without
bound whereas a small $\gamma$ corresponds to a situation where only
a baseline level $c$ of Ana's income is protected as Ben's income
grows without bound. 

For these orderings, protected income is therefore a weighted geometric
average of a normatively determined baseline income level, $c$, and
the original income level, $y$, and where the weight on income is
precisely the income elasticity $\ddot{\beta}$. Because this income
elasticity is the same for high or low income levels, we can identify
these orderings as those having constant protected income elasticities
and, when staying within this class, knowing the level $c>0$ and
the magnitude of the elasticity $\ddot{\beta}$ is sufficient to pin
down the entire ordering, as one can recover the parameter $\gamma$
through the expression $\ddot{\beta}=2^{\ensuremath{\frac{1}{1-\gamma}}}$. 

We show that constant protected income elasticities and a baseline
protected income level indeed characterize this class of social welfare
functions.
\begin{prop}
\textup{\label{CPIE}}The subclass of social welfare functions generated
by $f$(as defined in this section) is, up to an affine transform,
characterized by the following property: There exist \textup{$\ddot{\beta},\dddot{\beta}\in(0,1),$
$c>0$ such that, for all $y\in\left(c,\infty\right)$,
\[
\ddot{Y}\left(y\right)=y^{\ddot{\beta}}c^{1-\ddot{\beta}}\text{ and \ensuremath{\dddot{Y}\left(y\right)=y^{\dddot{\beta}}c^{1-\dddot{\beta}}.}}
\]
In particular, necessarily there is $\gamma>1$ such that }$\ddot{\beta}=2^{\ensuremath{\frac{1}{1-\gamma}}},\dddot{\beta}=3^{\ensuremath{\frac{1}{1-\gamma}}}$.
\end{prop}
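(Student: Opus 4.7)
The plan is to mirror the structure of the proof of Proposition \ref{prop:fixedprop123}. By Proposition \ref{existence}, $f$ must have an upper bound, and without loss of generality we normalize $f(+\infty)=0$, so $f<0$ on $(c,+\infty)$. The defining equations $\ddot{Y}(y)=f^{-1}(2f(y))$ and $\dddot{Y}(y)=f^{-1}(3f(y))$ then translate the two hypotheses into the functional equations
\[
f\left(y^{\ddot{\beta}}c^{1-\ddot{\beta}}\right)=2f(y),\qquad f\left(y^{\dddot{\beta}}c^{1-\dddot{\beta}}\right)=3f(y),
\]
valid for all $y>c$.

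Next, apply the substitution $y=ce^{z}$ (so $z>0$) and set $h(z)=f(ce^{z})$. Under this change of variables $y^{\ddot{\beta}}c^{1-\ddot{\beta}}=ce^{\ddot{\beta}z}$, so the two equations become $h(\ddot{\beta}z)=2h(z)$ and $h(\dddot{\beta}z)=3h(z)$. Since $h<0$, we may take logarithms. A further substitution $z=e^{u}$ and the definition $\tilde{G}(u):=\ln(-h(e^{u}))$ convert the pair of multiplicative equations into the pair of translation equations
\[
\tilde{G}(u+\ln\ddot{\beta})=\tilde{G}(u)+\ln 2,\qquad \tilde{G}(u+\ln\dddot{\beta})=\tilde{G}(u)+\ln 3,
\]
for all $u\in\mathbb{R}$. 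This is exactly the two-period quasi-periodic structure already handled in the second half of the proof of Proposition \ref{prop:fixedprop123}.

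From this point the argument of Proposition \ref{prop:fixedprop123} transplants directly: iterating yields $\tilde{G}(u+k\ln\ddot{\beta})-\tilde{G}(u)=k\ln 2$ and $\tilde{G}(u+k\ln\dddot{\beta})-\tilde{G}(u)=k\ln 3$ for all $k\in\mathbb{Z}$, and the quasi-periodicity forces $\tilde{G}(u)$ to lie in a bounded strip around each of the two lines $u\mapsto\tilde{G}(0)+u\tfrac{\ln 2}{\ln\ddot{\beta}}$ and $u\mapsto\tilde{G}(0)+u\tfrac{\ln 3}{\ln\dddot{\beta}}$, which is possible only if these slopes coincide. Denoting the common negative value of the slope by $1-\gamma$ yields $\gamma>1$ together with $\ddot{\beta}=2^{1/(1-\gamma)}$ and $\dddot{\beta}=3^{1/(1-\gamma)}$.

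To recover the functional form of $f$, further normalize the affine transform by setting $f(ce)=\tfrac{1}{1-\gamma}$, i.e.\ $h(1)=\tfrac{1}{1-\gamma}$. Iterating the two functional equations at $z=1$ gives $h(\ddot{\beta}^{m}\dddot{\beta}^{n})=\tfrac{2^{m}3^{n}}{1-\gamma}$ for all $m,n\in\mathbb{Z}$, which, since $\ddot{\beta}^{m}\dddot{\beta}^{n}=(2^{m}3^{n})^{1/(1-\gamma)}$, is precisely $h(z)=\tfrac{z^{1-\gamma}}{1-\gamma}$ on the set $S=\{2^{m/(1-\gamma)}3^{n/(1-\gamma)}:m,n\in\mathbb{Z}\}$. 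The density of $S$ in $(0,+\infty)$, established via the irrationality of $\log_{2}3$ exactly as at the end of the proof of Proposition \ref{prop:fixedprop123}, combined with continuity of $h$, extends this identity to all $z>0$ and yields $f(y)=\tfrac{(\ln(y/c))^{1-\gamma}}{1-\gamma}$ on $(c,+\infty)$. The main obstacle is setting up the two successive changes of variables so that the problem reduces cleanly to the one already solved; once that reduction is in place, both the quasi-periodicity/slope-matching argument and the Tossavainen--Haukkanen density argument can be invoked directly from the proof of Proposition \ref{prop:fixedprop123}.
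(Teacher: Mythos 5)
Your proposal is correct and follows essentially the same route as the paper's own proof: the same two successive changes of variables (first $y=ce^{z}$ with $h(z)=f(ce^{z})$, then the logarithmic substitution turning the scaling equations into translation equations), the same quasi-periodicity/slope-matching step forcing $\frac{\ln 2}{\ln\ddot{\beta}}=\frac{\ln 3}{\ln\dddot{\beta}}=1-\gamma$, and the same Tossavainen--Haukkanen density argument to pin down $h(z)=\frac{z^{1-\gamma}}{1-\gamma}$ and hence $f$. No substantive differences to report.
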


\begin{proof}
See Appendix \ref{subsec:Proof-of-Proposition-seven}.
\end{proof}

\section{Unequal status quo }

So far, we have taken equality as a starting point and examined what
sacrifice could be imposed on Ana for the sake of one or two other
individuals. It is interesting also to generalize the analysis and
examine starting points containing some inequality.

Consider any starting point $\left(y_{1},y_{2}\right)$. How much
of Ana's income is protected against a rise in Ben's income? This
can be derived in straightforward manner from the previous notions.
First, let us define the equally-distributed equivalent, relative
to function $f$, of the unequal distribution $\left(y_{1},y_{2}\right):$
\[
EE_{f}\left(y_{1},y_{2}\right):=f^{-1}\left[\frac{1}{2}f\left(y_{1}\right)+\frac{1}{2}f\left(y_{2}\right)\right].
\]

Retaining the same notation $\ddot{Y}$ for both an unequal and an
equal status quo, it is clear that
\[
\ddot{Y}\left(y_{1},y_{2}\right)=\ddot{Y}\left(EE_{f}\left(y_{1},y_{2}\right)\right),
\]
since an egalitarian status quo that would yield the same social welfare
as $\left(y_{1},y_{2}\right)$ is suitable as a benchmark for the
social welfare level to keep constant while Ana's income is decreased
for the sake of Ben's. Retaining the convention that $f\left(+\infty\right)=0$
for a bounded function, one also computes
\[
\ddot{Y}\left(y_{1},y_{2}\right)=f^{-1}\left(f\left(y_{1}\right)+f\left(y_{2}\right)\right).
\]

It is worth observing that there is a correspondence between the properties
of protected income (and collateral damage) and invariance properties
of $EE_{f}$ which also characterize the corresponding social welfare
functions. Let $EE_{\eta}\left(y_{1},y_{2}\right)$ denote the equally-distributed
equivalent for a Kolm-Atkinson social preference with inequality aversion
coefficient $\eta>0$. Let $EE_{\alpha}\left(y_{1},y_{2}\right)$
denote the equally-distributed equivalent for a Kolm-Pollak social
preference with inequality aversion coefficient $\alpha>0$. Last,
given $c>0$, $y_{1}\geq c$ and $y_{2}\geq c$, let $EE_{\gamma}\left(y_{1},y_{2},c\right)$
denote the equally-distributed equivalent for the social preference
defined in Section 6 with inequality aversion coefficient $\gamma>0$.
We obtain the following:
\begin{prop}
For the Kolm-Atkinson class, $EE_{\eta}\left(y_{1},y_{2}\right)$
is scale invariant: for every \textup{$\kappa>0$, 
\[
EE_{\eta}\left(\kappa y_{1},\kappa y_{2}\right)=\kappa EE_{\eta}\left(y_{1},y_{2}\right).
\]
}For the Kolm-Pollack class, $EE_{\alpha}\left(y_{1},y_{2}\right)$
is translation invariant: for every $k>0,$
\[
EE_{\alpha}\left(y_{1}+k,y_{2}+k\right)=EE_{\alpha}\left(y_{1},y_{2}\right)+k.
\]
 For the class introduced in Section 6, given $c>0$, $y_{1}\geq c$\textup{
and $y_{2}\geq c$,} $EE_{\gamma}\left(y_{1},y_{2},c\right)$ is compound
invariant: for every $\rho>0$,
\[
EE_{\gamma}\left(y_{1}^{\rho},y_{2}^{\rho},c^{\rho}\right)=EE_{\gamma}\left(y_{1},y_{2},c\right)^{\rho}.
\]
\end{prop}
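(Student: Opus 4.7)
The plan is to verify each of the three invariance properties by direct substitution into the formula
$$EE_f(y_1,y_2)=f^{-1}\!\left[\tfrac{1}{2}f(y_1)+\tfrac{1}{2}f(y_2)\right],$$
using the explicit $f$ exhibited earlier for each of the three classes. The key fact that makes each calculation go through cleanly is that the relevant transformation (multiplicative, additive, or the compound $y\mapsto y^{\rho}$) acts on the inner argument of $f$ in exactly the way needed for a scalar to factor out of the average before $f^{-1}$ is applied.

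For Kolm-Atkinson with $\eta\neq 1$, I would substitute $f(y)=y^{1-\eta}/(1-\eta)$ to write $EE_{\eta}(y_1,y_2)=\bigl[\tfrac12 y_1^{1-\eta}+\tfrac12 y_2^{1-\eta}\bigr]^{1/(1-\eta)}$. Replacing $y_i$ by $\kappa y_i$ pulls a factor $\kappa^{1-\eta}$ out of the bracket, which becomes $\kappa$ after raising to the $1/(1-\eta)$ power; the case $\eta=1$ reduces to the geometric mean $\sqrt{y_1 y_2}$, which is trivially scale invariant. For Kolm-Pollak, $f(y)=-e^{-\alpha y}$ yields $EE_{\alpha}(y_1,y_2)=-\tfrac{1}{\alpha}\ln\bigl[\tfrac12 e^{-\alpha y_1}+\tfrac12 e^{-\alpha y_2}\bigr]$; shifting $y_i\mapsto y_i+k$ pulls out a common factor $e^{-\alpha k}$ inside the logarithm, which contributes an additive $k$ after division by $-1/\alpha$ (the $\alpha=0$ degenerate case is trivial).

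For the Section 6 class, the crucial observation is the identity $\ln(y^{\rho}/c^{\rho})=\rho\ln(y/c)$. Substituting into the explicit form of $f$ (for $\gamma\neq 1$) gives
$$EE_{\gamma}(y_1,y_2,c)=c\cdot\exp\!\left\{\left[\tfrac12(\ln(y_1/c))^{1-\gamma}+\tfrac12(\ln(y_2/c))^{1-\gamma}\right]^{\!1/(1-\gamma)}\right\}.$$
Replacing $(y_1,y_2,c)$ by $(y_1^{\rho},y_2^{\rho},c^{\rho})$, the identity above pulls a factor $\rho^{1-\gamma}$ out of the bracket, which becomes $\rho$ after raising to $1/(1-\gamma)$, so the inner exponent scales by $\rho$; combined with the prefactor $c^{\rho}=(c)^{\rho}$, the whole expression is $EE_{\gamma}(y_1,y_2,c)^{\rho}$. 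The $\gamma=1$ case, where $EE_{\gamma}=c\exp[\sqrt{\ln(y_1/c)\ln(y_2/c)}]$, works identically because $\rho$ factors out of each logarithm and combines into an overall $\rho$-th power.

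There is no real obstacle in this proposition; it is a verification proposition. The only mild subtlety is keeping track of the case distinctions ($\eta=1$ and $\gamma=1$) and confirming that the argument of $f$ remains in its domain under the transformation (in the compound case one must check $y_i^{\rho}\geq c^{\rho}$, which follows from $y_i\geq c$ since $\rho>0$). I would therefore present each case as a short one-line chain of equalities, noting that the three invariances are precisely the classical ones associated, respectively, with scale-invariant, translation-invariant, and compound-invariant means.
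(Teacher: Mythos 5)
Your proof is correct and takes essentially the same route as the paper: the paper dismisses the first two invariances as ``known'' and proves only the compound-invariance case, by exactly the computation you describe --- pulling the factor $\rho^{1-\gamma}$ out of the averaged $\left[\ln\left(y_i/c\right)\right]^{1-\gamma}$ terms (and the additive $\ln\rho$ out of each $\ln\ln$ when $\gamma=1$) so that $\ln\left(EE^{*}/c^{\rho}\right)=\rho\ln\left(EE/c\right)$. Your explicit verification of the Kolm-Atkinson and Kolm-Pollak cases, and your note on the domain condition $y_i^{\rho}\geq c^{\rho}$, are harmless additions to what the paper records.
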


\begin{proof}
This is known for the first two classes, and we only prove the third
case. 

Let $\gamma\neq1.$ Then $EE=EE_{\gamma}\left(y_{1},y_{2},c\right)$
satisfies
\[
\frac{1}{2}\frac{\left[\ln\left(\frac{y_{1}}{c}\right)\right]{}^{1-\gamma}}{1-\gamma}+\frac{1}{2}\frac{\left[\ln\left(\frac{y_{2}}{c}\right)\right]{}^{1-\gamma}}{1-\gamma}=\frac{\left[\ln\left(\frac{EE}{c}\right)\right]{}^{1-\gamma}}{1-\gamma}.
\]
Let $\rho>0$. Then $EE^{*}=EE_{\gamma}\left(y_{1}^{\rho},y_{2}^{\rho},c^{\rho}\right)$satisfies
\[
\frac{1}{2}\frac{\left[\ln\left(\frac{y_{1}}{c}\right)^{\rho}\right]{}^{1-\gamma}}{1-\gamma}+\frac{1}{2}\frac{\left[\ln\left(\frac{y_{2}}{c}\right)^{\rho}\right]{}^{1-\gamma}}{1-\gamma}=\frac{\left[\ln\left(\frac{EE^{*}}{c^{\rho}}\right)\right]^{1-\gamma}}{1-\gamma},
\]
that is,
\[
\frac{1}{2}\frac{\left[\rho\ln\left(\frac{y_{1}}{c}\right)\right]^{1-\gamma}}{1-\gamma}+\frac{1}{2}\frac{\left[\rho\ln\left(\frac{y_{2}}{c}\right)\right]^{1-\gamma}}{1-\gamma}=\frac{\left[\ln\left(\frac{EE^{*}}{c^{\rho}}\right)\right]^{1-\gamma}}{1-\gamma},
\]

\[
\frac{1}{2}\frac{\rho^{1-\gamma}\left[\ln\left(\frac{y_{1}}{c}\right)\right]^{1-\gamma}}{1-\gamma}+\frac{1}{2}\frac{\rho^{1-\gamma}\left[\ln\left(\frac{y_{2}}{c}\right)\right]^{1-\gamma}}{1-\gamma}=\frac{\left[\ln\left(\frac{EE^{*}}{c^{\rho}}\right)\right]^{1-\gamma}}{1-\gamma}.
\]

\[
\rho^{1-\gamma}\frac{\left[\ln\left(\frac{EE}{c}\right)\right]{}^{1-\gamma}}{1-\gamma}=\frac{\left[\ln\left(\frac{EE^{*}}{c^{\rho}}\right)\right]^{1-\gamma}}{1-\gamma}
\]

\[
\frac{\ln\left(\left(\frac{EE}{c}\right)^{\rho}\right)^{1-\gamma}}{1-\gamma}=\frac{\left[\ln\left(\frac{EE^{*}}{c^{\rho}}\right)\right]^{1-\gamma}}{1-\gamma}
\]
 from where it follows that $EE^{\rho}=EE^{*}$. 

Now let $\gamma=1.$ Then $EE=EE_{1}\left(y_{1},y_{2},c\right)$ satisfies
\[
\frac{1}{2}\ln\ln\frac{y_{1}}{c}+\frac{1}{2}\ln\ln\frac{y_{2}}{c}=\ln\ln\left(\frac{EE}{c}\right).
\]
Fix $\rho>0$. Then $EE^{*}=EE_{1}\left(y_{1}^{\rho},y_{2}^{\rho},c^{\rho}\right)$satisfies
\[
\frac{1}{2}\ln\ln\left(\frac{y_{1}}{c}\right)^{\rho}+\frac{1}{2}\ln\ln\left(\frac{y_{2}}{c}\right)^{\rho}=\ln\ln\left(\frac{EE^{*}}{c^{\rho}}\right),
\]
that is,
\[
\frac{1}{2}\ln\left(\rho\ln\frac{y_{1}}{c}\right)+\frac{1}{2}\ln\left(\rho\ln\frac{y_{2}}{c}\right)=\ln\ln\left(\frac{EE^{*}}{c^{\rho}}\right),
\]

\[
\frac{1}{2}\left[\ln\rho+\ln\ln\frac{y_{1}}{c}\right]+\frac{1}{2}\left[\ln\rho+\ln\ln\frac{y_{2}}{c}\right]=\ln\ln\left(\frac{EE^{*}}{c^{\rho}}\right),
\]

\[
\ln\rho+\frac{1}{2}\ln\ln\frac{y_{1}}{c}+\frac{1}{2}\ln\ln\frac{y_{2}}{c}=\ln\ln\left(\frac{EE^{*}}{c^{\rho}}\right),
\]

\[
\ln\rho+\ln\ln\left(\frac{EE}{c}\right)=\ln\ln\left(\frac{EE^{*}}{c^{\rho}}\right),
\]

\[
\ln\left(\rho\ln\left(\frac{EE}{c}\right)\right)=\ln\ln\left(\frac{EE^{*}}{c^{\rho}}\right),
\]

\[
\ln\ln\left(\frac{EE}{c}\right)^{\rho}=\ln\ln\left(\frac{EE^{*}}{c^{\rho}}\right),
\]
from where it follows that $EE^{\rho}=EE^{*}$, which is what we wanted
to show.
\end{proof}

\section{Conclusions}

This paper reveals a fundamental structure within additively separable social welfare functions that has gone unrecognized until now. Through this discovery, we provide novel characterizations of well-known social preferences using intuitively interpretable parameters. The practical significance of these parametric representations lies in their straightforward elicitation and validation, making them particularly valuable for empirical applications. To wit: the Kolm-Pollak social preferences can be
described by noting that they exhibit \emph{constant difference protected
income} (CDPI), and we write $f\left(y\right)=-2^{-\frac{y}{L}}$
where $L>0$ identifies the largest collateral damage that is acceptable
to the evaluator in a two-person evaluation for all income levels,
and therefore $\ddot{Y}\left(y\right)=y-L$ for all $y>0$. In turn,
the Kolm-Atkinson social preferences that display positive protection
can be described by noting that they exhibit \emph{constant relative
protected income} (CRPI), and we write $f\left(y\right)=-y^{\frac{1}{\log_{2}\lambda}}$
where $\lambda\in\left(0,1\right)$ identifies the smallest fraction
of income that will be protected by the evaluator in a two-person
evaluation for all income levels, and therefore $\ddot{Y}\left(y\right)=\lambda y$
for all $y>0$. Last, we can say that the sub-class of social preferences
we proposed in Section 6 that displays positive protection exhibits
\emph{constant protected income elasticities} (CPIE), and we write
$f\left(y\right)=-\left(\ln\frac{y}{c}\right)^{\frac{1}{\log_{2}\beta}}$
for $y>c$, where $\beta\in\left(0,1\right)$ is the income elasticity
of protected income for all income levels in a two-person evaluation,
$c$ is a normatively determined lower bound on protected income for
all $\beta\in\left(0,1\right),$ and $\ddot{Y}\left(y\right)=y^{\beta}c^{1-\beta}$
for all $y>c$. 

With the contribution of this paper, the menu of choices for how to
elicit ethical preferences about inequality aversion is enriched with
a framing that is quite different from the leaky-bucket experiment
usually employed, explicitly or implicitly, in surveys and experiments.
Giving sacrificed individuals a protection against the collateral
damage of pursuing the improvement in the lot of other people seems
an attractive principle, and it can serve to calibrate the coefficient
of inequality aversion. 

Moreover, we conjecture that this thought experiment tends to justify
higher values of inequality aversion than the leaky-bucket experiment.
Two cognitive mechanisms may induce this phenomenon. First, the leaky-bucket
thought experiment involves a loss to society, as part of the transfer
is wasted, and loss aversion is likely to kick in and generate a jaundiced
view of the transfer. Second, the usual application involving a discrepancy
between finite transfers and a numerical calibration of the coefficient
that is only justified for infinitesimal transfers biases the analysis
in the direction of a smaller coefficient (e.g., the proportional
transfer principle requires a coefficient of 2 but only a coefficient
of 1 for infinitesimal transfers, and it is wrong to infer from the
respondents' acceptance of proportional transfers that their coefficient
is 1).

In the thought experiment proposed here, loss aversion pushes in the
direction of greater inequality aversion, since the loss that is imposed
on the sacrificed individual looms large in the ethical evaluation.
We therefore propose that empirical estimations of inequality aversion
in the population should no longer be confined to leaky-bucket types
of experiments (and certainly not confuse infinitesimal and large
transfers) and should instead, or at least additionally, involve a
``protected level'' type of experiment in order to obtain another
relevant estimate.

Having representations of social preferences that
are easy to motivate and interpret, as we have aimed to provide in
this paper and its companion website, \href{https://osf.io/tnu2q/}{https://osf.io/tnu2q/}, 
can lower the barriers to adoption of these social welfare
tools by applied economists. This would be a welcome development to
anyone interested in fostering the inequality sensitive evaluation
of economic and social policies.

\section{References}

Aboudi, R., \& Thon, D. (2003). Transfer principles and relative inequality
aversion a majorization approach. Mathematical social sciences, 45(3),
299-311.

Aboudi, R., \& Thon, D. (2006). Refinements of Muirhead's Lemma and
income inequality. Mathematical social sciences, 51(2), 201-216.

Atkinson, A. B. (1970), \textquotedblleft On the Measurement of Inequality,\textquotedblright{}
Journal of Economic Theory, 2, 244\textendash 263.

Bosmans, K. (2014). Distribution-sensitivity of rank-dependent poverty
measures. Journal of Mathematical Economics, 51, 69-76.

Chakravarty, S. R. (2009). Inequality, polarization and poverty. Advances
in distributional analysis. New York.

Cowell, F. A. (2000). Measurement of inequality. \emph{Handbook of
income distribution}, vol. 1, 87\textendash 166.

Dubois, M. (2016). A note on the normative content of the Atkinson
inequality aversion parameter. Economics Bulletin.

Dutta, B. (2002). Inequality, poverty and welfare. Handbook of social
choice and welfare, 1, 597-633.

Fleurbaey M., P. Michel (2001). ``Transfer principles and inequality
aversion, with an application to optimal growth,'' \emph{Mathematical
Social Sciences} 42: 1\textendash 11.

Gaertner, W., \& Schokkaert, E. (2012). Empirical social choice: questionnaire-experimental
studies on distributive justice. Cambridge University Press.

Hurley, J., Mentzakis, E., \& Walli-Attaei, M. (2020). Inequality
aversion in income, health, and income-related health. Journal of
health economics, 70, 102276.

Kolm, S. C. (1976). Unequal inequalities. I. Journal of economic Theory,
12(3), 416-442.

Okun, A. M. (2015). Equality and efficiency: The big tradeoff. Brookings
Institution Press.

Thon D., S.W. Wallace (2004). ``Dalton transfers, inequality and
altruism,'' Social Choice and Welfare 22: 447-465.

Tossavainen T., P. Haukkanen (2007). ``The Schröder Equation and
Some Elementary Functions'', \emph{Int. Journal of Math. Analysis}
1(3): 101\textendash 106.

Venmans, F., \& Groom, B. (2021). Social discounting, inequality aversion,
and the environment. Journal of Environmental Economics and Management,
109, 102479.

\section{Appendix}

\subsection{Proof of Proposition \ref{prop:The-Kolm-Pollak-subclass}\label{subsec:Proof-of-Proposition-5}}
\begin{proof}
From Prop. \ref{prop:constcodam21} we know that $f\left(y\right)=-e^{g\left(y\right)}$
for a function $g$ such that 
\[
g\left(\left(k+1\right)\varDelta+a\right)-g\left(k\varDelta+a\right)=-\ln2
\]
for all $k\in\mathbb{Z}_{+},$$a\in\left[0,\varDelta\right]$. By
a similar reasoning, the function $g$ must also satisfy
\[
g\left(\left(k+1\right)\varOmega+a\right)-g\left(k\varOmega+a\right)=-\ln3
\]
for all $k\in\mathbb{Z}_{+},$$a\in\left[0,\varOmega\right]$. In
particular, for all $k\in\mathbb{Z}_{+},$
\[
g\left(k\varDelta\right)-g\left(0\right)=-k\ln2
\]
\[
g\left(k\varOmega\right)-g\left(0\right)=-k\ln3.
\]
Given the quasi-periodic behavior of $g$, there exist $b<0<c$ such
that for all $x\in\mathbb{R}$, 
\[
b<g\left(x\right)-g\left(0\right)+x\frac{\ln2}{\varDelta}<c
\]
\[
b<g\left(x\right)-g\left(0\right)+x\frac{\ln3}{\varOmega}<c.
\]
This is possible only if 
\[
\frac{\ln2}{\varDelta}=\frac{\ln3}{\varOmega}.
\]
Let $\alpha$ be this common ratio. This proves that $\varDelta=\frac{\ln2}{\alpha}$
and $\varOmega=\frac{\ln3}{\alpha}$ for some $\alpha>0$.

Without loss of generality, we can let $\ensuremath{f}\left(+\infty\right)=0$
and $\ensuremath{f}\left(0\right)=-1$. Then

\[
\left\{ \begin{array}{c}
f\text{\ensuremath{\left(y-\varDelta\right)}}=2\ensuremath{f}\left(y\right)\\
f\text{\ensuremath{\left(y-\varOmega\right)}}=3\ensuremath{f}\left(y\right).
\end{array}\right.
\]

Letting $h=f\circ ln$, a change of variables $y=\ln x$ yields

\[
\left\{ \begin{array}{c}
h\text{\ensuremath{\left(2^{-\frac{1}{\alpha}}x\right)}}=2\ensuremath{h}\left(x\right)\\
h\text{\ensuremath{\left(3^{-\frac{1}{\alpha}}x\right)}}=3\ensuremath{h}\left(x\right).
\end{array}\right.
\]

The rest of the argument follows the proof of the bottom half of Prop.
\ref{prop:fixedprop123}, for $\eta=1+\alpha$, from where we obtain
that $h\text{\ensuremath{\left(x\right)=h\left(1\right)x^{-\alpha}} for all \ensuremath{x\in\left(0,+\infty\right).}}$
Then $f\left(y\right)=h(e^{y})=h\left(e^{0}\right)e^{-\alpha y}=f\left(0\right)e^{-\alpha y}=-e^{-\alpha y}$
for all $y\in\mathbb{R}.$
\end{proof}

\subsection{Proof of Proposition \ref{CPIE}\label{subsec:Proof-of-Proposition-seven}\protect 
}\begin{proof}
The proof mimics previous proofs. First, one shows that $\ddot{\beta}=2^{\ensuremath{\frac{1}{1-\gamma}}},\dddot{\beta}=3^{\ensuremath{\frac{1}{1-\gamma}}}$.
To do this, we note that, by definition, $\frac{1}{2}f\text{\ensuremath{\left(\ddot{Y}\left(y\right)\right)}}+\frac{1}{2}\text{\ensuremath{f}\ensuremath{\left(\infty\right)}}=\ensuremath{f}\left(y\right)$.
Notice that $\ddot{Y}\left(y\right)$$\in\left(0,y\right)$ and since
$f$ is monotone increasing this implies that $f$ is bounded from
above. Without loss of generality, we can let $\ensuremath{f}\left(+\infty\right)=0$
and $\ensuremath{f}\left(c\cdot e\right)=\frac{1}{1-\text{\ensuremath{\gamma}}}$.
Then $f\text{\ensuremath{\left(y^{\ddot{\beta}}c^{1-\ddot{\beta}}\right)}}=2\ensuremath{f}\left(y\right)$.
Similarly, we know that $\frac{1}{3}f\text{\ensuremath{\left(\dddot{Y}\left(y\right)\right)}}+\frac{1}{3}f\text{\ensuremath{\left(\infty\right)}}+\frac{1}{3}\text{\ensuremath{f}\ensuremath{\left(\infty\right)}}=\ensuremath{f}\left(y\right)$
and so $f\text{\ensuremath{\left(y^{\dddot{\beta}}c^{1-\dddot{\beta}}\right)}}=3\ensuremath{f}\left(y\right)$.
Let $z=\ln y-\ln c$ and $h\left(z\right)=f\left(c\cdot e^{z}\right)$.
Then the equations $f\text{\ensuremath{\left(y^{\ddot{\beta}}c^{1-\ddot{\beta}}\right)}}=2\ensuremath{f}\left(y\right)$
and $f\text{\ensuremath{\left(y^{\dddot{\beta}}c^{1-\dddot{\beta}}\right)}}=3\ensuremath{f}\left(y\right)$
respectively become

\[
\left\{ \begin{array}{c}
h\text{\ensuremath{\left(\ddot{\beta}z\right)}}=2\ensuremath{g}\left(z\right)\\
h\text{\ensuremath{\left(\dddot{\beta}z\right)}}=3\ensuremath{g}\left(z\right).
\end{array}\right.
\]
Now let $g=\ln\left(-h\circ e\right)$. A second change of variables
$x=\ln z$ yields

\[
\left\{ \begin{array}{c}
\ln2+g\left(x\right)=g\left(\ln\ddot{\beta}+x\right)\\
\ln3+g\left(x\right)=g\left(\ln\dddot{\beta}+x\right).
\end{array}\right.
\]

We obtain that, for all $k\in\mathbb{Z}_{+},$
\[
g\left(k\ln\ddot{\beta}\right)-g\left(0\right)=k\ln2
\]
\[
g\left(k\ln\dddot{\beta}\right)-g\left(0\right)=k\ln3
\]
which, as in the proof of Prop. \ref{prop:fixedprop123}, is only
possible if 
\[
\frac{\ln2}{\ln\ddot{\beta}}=\frac{\ln3}{\ln\dddot{\beta}}.
\]
We now let $1-\gamma$ be this common (negative) ratio. This proves
that $\ddot{\beta}=2^{\ensuremath{\frac{1}{1-\gamma}}}$ and $\dddot{\beta}=3^{\ensuremath{\frac{1}{1-\gamma}}}$
for some $\gamma>1$. 

The rest of the argument is as in the second half of the proof of
Prop. \ref{prop:fixedprop123}, from where it follows that $\ensuremath{h}\left(z\right)=h\left(1\right)z^{1-\gamma}$
for all $z\in\left(0,\infty\right)$. Since $h\left(1\right)=f\left(c\cdot e\right)=\frac{1}{1-\text{\ensuremath{\gamma}}}$,
we obtain that $f\left(y\right)=g\left(\ln y-\ln c\right)=\frac{\left(\ln\frac{y}{c}\right)^{1-\gamma}}{1-\gamma}$
for all $y\in\left(c,\infty\right)$, which is what we wanted to show.
\end{proof}

\end{document}